\newtheorem{proposition}{Proposition}
\newtheorem{conjecture}{Conjecture}
\newtheorem{lemma}{Lemma}[]
\newtheorem{remark}{Remark}[]
\title{Inference on autoregulation in gene expression}
\author[1,2,*]{Yue Wang}
\author[3]{Siqi He}
\affil[1]{Department of Computational Medicine, University of California, Los Angeles, California, United States of America}
\affil[2]{Institut des Hautes \'Etudes Scientifiques (IH\'ES), Bures-sur-Yvette, Essonne, France}
\affil[3]{Simons Center for Geometry and Physics, Stony Brook University, Stony Brook, New York, United States of America}
\affil[*]{E-mail address: yuew@g.ucla.edu (Y. W.). ORCID: 0000-0001-5918-7525}
\date{}                                           
\begin{document}
\maketitle

\begin{abstract}
Some genes can promote or repress their own expressions, which is called autoregulation. Although gene regulation is a central topic in biology, autoregulation is much less studied. In general, it is extremely difficult to determine the existence of autoregulation with direct biochemical approaches. Nevertheless, some papers have observed that certain types of autoregulations are linked to noise levels in gene expression. We generalize these results by two propositions on discrete-state continuous-time Markov chains. These two propositions form a simple but robust method to infer the existence of autoregulation from gene expression data. This method only needs to compare the mean and variance of the gene expression level. Compared to other methods for inferring autoregulation, our method only requires non-interventional one-time data, and does not need to estimate parameters. Besides, our method has few restrictions on the model. We apply this method to four groups of experimental data and find some genes that might have autoregulation. Some inferred autoregulations have been verified by experiments or other theoretical works.
	
\end{abstract}

\smallskip
\noindent \textbf{Keywords.} 

\noindent inference; gene expression; autoregulation; Markov chain.

\

\noindent \textbf{Frequently used abbreviations:} 

\noindent GRN: gene regulatory network.

\noindent VMR: variance-to-mean ratio

\section{Introduction}
\label{intro}

In general, genes are transcribed to mRNAs and then translated to proteins. We can use the abundance of mRNA or protein to represent the expression levels of genes. Both the synthesis and degradation of mRNAs/proteins are affected (activated or inhibited) by the expression levels of other genes \cite{karamyshev2018lost}, which is called (mutual) gene regulation. Genes and their regulatory relations form a gene regulatory network (GRN) \cite{cunningham2015mechanisms}, generally represented as a directed graph: each vertex is a gene, and each directed edge is a regulatory relationship. See Fig.~\ref{grn} for an example of GRN.

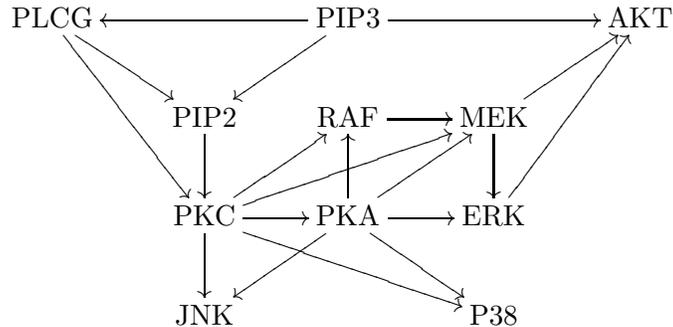
\begin{figure}
	\center
	$\xymatrix{
		\text{PLCG}\ar[rd]\ar[rdd]&&\text{PIP3}\ar[ll]\ar[ld]\ar[rr]&&\text{AKT}\\
		&\text{PIP2}\ar[d]&\text{RAF}\ar[r]&\text{MEK}\ar[ru]\ar[d]&\\
		&\text{PKC}\ar[ru]\ar[rru]\ar[r]\ar[d]\ar[rrd]&\text{PKA}\ar[u]\ar[ru]\ar[r]\ar[rd]\ar[ld]&\text{ERK}\ar[ruu]&\\
		&\text{JNK}&&\text{P38}&
	}$\\
	\caption{An example of GRN in human T cells \cite{werhli2006comparative}. Each vertex is a gene. Each arrow is a regulatory relationship. Notice that it has no directed cycle.}
	\label{grn}
\end{figure}

The expression of one gene could promote/repress its own expression, which is called positive/negative autoregulation \cite{carrier1999investigating}. Autoregulation is very common in \emph{E. coli} \cite{shen2002network}. Positive autoregulation is also called autocatalysis or autoactivation, and negative autoregulation is also called autorepression \cite{baumdick2018conformational,fang2017sirt7}. For instance, HOX proteins form and maintain spatially inhomogeneous expression of HOX genes \cite{sheth2014self}. For genes with position-specific expressions during development, it is common that the increase of one gene can further increase or decrease its level \cite{wang2020biological}.

While countless works infer the regulatory relationships between different genes (GRN structure) \cite{wang2022inference}, determining the existence of autoregulation is an equally important yet less-studied field. Due to technical limitations, it is difficult and sometimes impossible to directly detect autoregulation in experiments. Instead, we can measure gene expression profiles and infer the existence of autoregulation. In this paper, we consider a specific data type: measure the expression levels of certain genes without intervention for a single cell (which reaches stationary) at a single time point, and repeat for many different cells to obtain a probability distribution for expression levels. Such single-cell non-interventional one-time gene expression data can be obtained with a relatively lower cost \cite{luecken2019current}.

With such single-cell level data for one gene $V$, we can calculate the ratio of variance and mean of the expression level (mRNA or protein count). This quantity is called the variance-to-mean ratio (VMR) or the Fano factor. Many papers that study gene expression systems with autoregulations have found that negative autoregulation can decrease noise (smaller VMR), and positive autoregulation can increase noise (larger VMR) \cite{thattai2001intrinsic,swain2004efficient,hornos2005self,munsky2012using,gronlund2013transcription,dessalles2017stochastic,czuppon2018limits}. This means VMR can be used to infer the existence of autoregulation. 

We generalize the above observation and develop two mathematical results that use VMR to determine the existence of autoregulation. They apply to some genes that have autoregulation. For genes without autoregulation, these results cannot determine that autoregulation does not exist. We apply these results to four experimental gene expression data sets and detect some genes that might have autoregulation. 

We start with some setup and introduce our main results (Section~\ref{setup}). Then we cite some previous works on this topic and compare them with our results (Section~\ref{related}). For a single gene that is not regulated by other genes (Section~\ref{auto}) and multiple genes that regulate each other (Section~\ref{multi}), we develop mathematical results to identify the existence of autoregulation. These two mathematical sections can be skipped. We summarize the procedure of our method and apply it to experimental data (Section~\ref{app}). We finish with some conclusions and discussions (Section~\ref{con}).

\section{Setup and main results}
\label{setup}
One possible mechanism of ``the increase of one gene's expression level further increases its expression level'' is a positive feedback loop between two genes \cite{hui2020increased}. Here $V_1$ and $V_2$ promote each other, so that the increase of $V_1$ increases $V_2$, which in return further increases $V_1$. We should not regard this feedback loop as autoregulation. When we define autoregulation for a gene $V$, we should fix environmental factors and other genes that regulate $V$, and observe whether the expression level of $V$ can affect itself. If $V$ is in a feedback loop that contains other genes, then those genes (which regulate $V$ and are regulated by $V$) cannot be fixed when we change $V$. Therefore, it is essentially difficult to determine whether $V$ has autoregulation in this scenario. In the following, we need to assume that $V$ is not contained in a feedback loop that involves other genes.

The actual gene expression mechanism might be complicated. Besides other genes/factors that can regulate a gene, for a gene $V$ itself, it might switch between inactivated (off) and activated (on) states \cite{cao2020analytical}. These states correspond to different transcription rates to produce mRNAs. When mRNAs are translated into proteins, those proteins might affect the transition of gene activation states, which forms autoregulation \cite{firman2018maximum}. See Fig.~\ref{mech} for an illustration. Therefore, for a gene $V$, we should regard the gene activation state, mRNA count, and protein count as a triplet of random variables $G,M,P$, which depend on each other. 

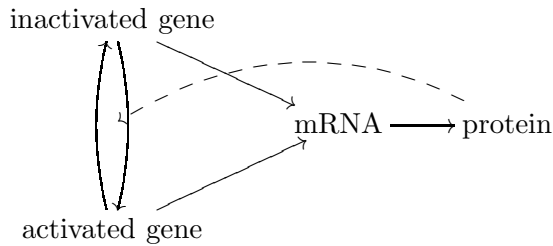
\begin{figure}
	\center
	$\xymatrix{
		\text{inactivated gene}\ar[rd]\ar@/^/[dd]&&\\
		&\text{mRNA}\ar[r]&\text{protein}\ar@{-->}@/_2pc/[ll]\\
		\text{activated gene}\ar[ru]\ar@/^/[uu]&
	}$\\
	\caption{The mechanism of gene expression. A gene might switch between inactivated state and activated state. Gene is transcribed into mRNAs, which are translated into proteins. Proteins might (auto)regulate the state transition of the corresponding gene.}
	\label{mech}
\end{figure}

When we fix environmental factors and other genes that affect $V$, the triplet $G,M,P$ should follow a continuous-time Markov chain. The state space is on/off (for $G$) or the mRNA/protein count on $\mathbb{Z}$ (for $M,P$). When we consider the expression level $M$ or $P$ (but do not control $G$), sometimes itself still follows a Markov chain, and we call this scenario ``\textbf{autonomous}''. In other cases, $M$ or $P$ itself is no longer Markovian, and we call this scenario ``\textbf{non-autonomous}''. We need to consider the triplet $G,M,P$ in the non-autonomous scenario.

For the autonomous scenario, we can fully classify autoregulation for a gene $V$. Assume environmental factors and other genes that affect the expression of $V$ are kept at constants. Define the expression level (mRNA count for example) of one cell to be $X=n$, the mRNA synthesis rate at $X=n-1$ to be $f_n$, and the degradation rate for each mRNA molecule at $X=n$ to be $g_n$. This is a standard continuous-time Markov chain on $\mathbb{Z}$ with transition rates 
\[\frac{1}{\Delta t}\mathbb{P}[X(t+\Delta t)=n\mid X(t)=n-1]=f_n,\]
\[\frac{1}{\Delta t}\mathbb{P}[X(t+\Delta t)=n-1\mid X(t)=n]=ng_n.\]
Define the relative growth rate $h_n=f_n/g_n$. If there is \textbf{no autoregulation}, then $h_n$ is a constant. \textbf{Positive autoregulation} means $h_n>h_{n-1}$ for some $n$, so that $f_n>f_{n-1}$ and/or $g_n<g_{n-1}$; \textbf{negative autoregulation} means $h_n<h_{n-1}$ for some $n$, so that $f_n<f_{n-1}$ and/or $g_n>g_{n-1}$. Notice that we can have $h_n>h_{n-1}$ for some $n$ and $h_{n'}<h_{n'-1}$ for some other $n'$, meaning that positive autoregulation and negative autoregulation can both exist for the same gene, but occur at different expression levels.

For the non-autonomous scenario, we can still define autoregulation. Consider the expression level $X$ of $V$ (mRNA count or protein count) and its interior factor $I$. If $X$ is the mRNA count, then $I$ is the gene state; if $X$ is the protein count, then $I$ is the gene state and the mRNA count. If there is \textbf{no autoregulation}, then $X$ cannot affect $I$, and for each value of $I$, the relative growth rate $h_n$ of $X$ is a constant. If $X$ can affect $I$, or $h_n$ is not a constant, then there is \textbf{autoregulation}. When $X$ can affect $I$, it is not always easy to distinguish between positive autoregulation and negative autoregulation.

Quantitatively, for the autonomous scenario, when we fix other factors that might regulate this gene $V$, if $V$ has no autoregulation, then $h_n=f_n/g_n$ is a constant $h$ for all $n$. In this case, the stationary distribution of $V$ satisfies $\mathbb{P}(X=n)/\mathbb{P}(X=n-1)=h/n$, meaning that the distribution is Poisson with parameter $h$, $\mathbb{P}(X=n)=h^ne^{-h}/n!$, and $\text{VMR}=1$. If there exists positive autoregulation of certain forms, $\text{VMR}>1$; if there exists negative autoregulation of certain forms, $\text{VMR}<1$. However, such results are derived by assuming that $f_n,g_n$ take certain functional forms, such as linear functions \cite{paulsson2005models,ramos2015gene}, quadratic functions \cite{giovanini2020comparative}, or Hill functions \cite{stewart2013under}. There are other papers that consider Markov chain models in gene expression/regulation \cite{jia2017simplification,sharma2014markov,shmulevich2003steady,chen2020limit,shen2019distributed,ko2019markov}, but the role of VMR is not thoroughly studied.

In this paper, we generalize the above result of inferring autoregulation with VMR by dropping the restrictions on parameters. Consider a gene $V$ in a known GRN, and assume it is not regulated by other genes, or assume other factors that regulate $V$ are fixed. Assume we have the \textbf{autonomous} scenario, meaning that its expression level $X=n$ satisfies a general Markov chain with synthesis rate $f_n$ and per molecule degradation rate $g_n$. We do not add any restrictions on $f_n$ and $g_n$. Use the single-cell non-interventional one-time gene expression data to calculate the VMR of $V$. Proposition~\ref{prop2} states that $\text{VMR}>1$ or $\text{VMR}<1$ means the existence of positive/negative autoregulation.

Nevertheless, the autonomous condition requires some assumptions, and often does not hold in reality \cite{bokes2012multiscale,jia2017emergent,jia2020kinetic,jia2017simplification}. Consider a gene $V$ that is not regulated by other genes, and has no autoregulation. The mRNA count or the protein count is regulated by the gene activation state, which cannot be fixed. Due to this non-controllable factor, there might be transcriptional bursting \cite{shahrezaei2008analytical,dobrinic2021prc1} or translational bursting \cite{cagnetta2019noncanonical}, where transcription or translation can occur in bursts, and we have $\text{VMR}>1$. This does not mean that Proposition~\ref{prop2} is wrong. Instead, it means that the expression level itself is not Markovian, and the scenario is non-autonomous. In this scenario, we should apply Proposition~\ref{np}, described below, which states that no autoregulation means $\text{VMR}\ge 1$.

We extend the idea of inferring autoregulation with VMR to a gene that is regulated by other genes, or with non-autonomous expression. Consider a gene $V'$ in a known GRN. Assume $V'$ is not contained in a feedback loop, and assume $g_n$, the per molecule degradation rate of $V'$, is not regulated by other genes or its interior factors. We do not add any restrictions on the synthesis rate $f_n$. Proposition~\ref{np} states that if $V'$ has no autoregulation, then $\text{VMR}\ge 1$. Therefore, $\text{VMR}<1$ means autoregulation for $V'$. The conclusion ``$\text{VMR}<1$ means autoregulation'' has been observed by Munsky et al. \cite{munsky2012using} for a single gene that is non-autonomous.

In the scenario that Proposition~\ref{np} may apply, if $\text{VMR}\ge 1$, Proposition~\ref{np} cannot determine whether autoregulation exists. In fact, with VMR, or even the full probability distribution, we might not distinguish a non-autonomous system with autoregulation from a non-autonomous system without autoregulation, which both have $\text{VMR}\ge 1$ \cite{cao2018linear}. In the non-autonomous scenario, we only focus on the less complicated case of $\text{VMR}<1$, and derive Proposition~\ref{np} that firmly links VMR and autoregulation.

In reality, Proposition~\ref{prop2} and Proposition~\ref{np} can only apply to a few genes (which are not regulated by other genes or have $\text{VMR}<1$), and they cannot determine negative results. Thus the inference results about autoregulation are a few ``yes'' and many ``we do not know''. Besides, for the results inferred by Proposition~\ref{prop2}, especially those with $\text{VMR}>1$ (positive autoregulation), we cannot verify whether their expression is autonomous, and the inference results are less reliable. 

Current experimental methods can hardly determine the existence of autoregulation, and to determine that a gene does not have autoregulation is even more difficult. Therefore, about whether genes in a GRN have autoregulation, experimentally, we do not have ``yes'' or ``no'', but a few ``yes'' and many ``we do not know''. Thus there is no gold standard to thoroughly evaluate the performance of our inference results. We can only report that some genes inferred by our method to have autoregulation are also verified by experiments or other inference methods to have autoregulation.

\section{Related works}
\label{related}
There are other mathematical approaches to infer the existence of autoregulation in gene expression \cite{sanchez2018bayesian,xing2005causal,feigelman2016analysis,veerman2021parameter,jia2018relaxation,zhou2012analytical,jia2020small,jia2020dynamical}. We introduce some works and compare them with our method. (\textbf{A}) Sanchez-Castillo et al. \cite{sanchez2018bayesian} considered an autoregressive model for multiple genes. This method (1) needs time series data; (2) requires the dynamics to be linear; (3) estimates a group of parameters. (\textbf{B}) Xing et al. \cite{xing2005causal} applied causal inference to a complicated gene expression model. This method (1) needs promoter sequences and information on transcription factor binding sites; (2) requires linearity for certain steps; (3) estimates a group of parameters. (\textbf{C}) Feigelman et al. \cite{feigelman2016analysis} applied a Bayesian method for model selection. This method (1) needs time series data; (2) estimates a group of parameters. (\textbf{D}) Veerman et al. \cite{veerman2021parameter} considered the probability-generating function of a propagator model. This method (1) needs time series data; (2) estimates a group of parameters; (3) needs to approximate a Cauchy integral. (\textbf{E}) Jia et al. \cite{jia2018relaxation} compared the relaxation rate with degradation rate. This method (1) needs interventional data; (2) only works for a single gene that is not regulated by other genes; (3) requires that the per molecule degradation rate is a constant.

Compared to other methods, our method has some advantages: (1) Our method uses non-interventional one-time data. Time series data require measuring the same cell multiple times without killing it, and interventional data require some techniques to interfere with gene expression, such as gene knockdown. Therefore, non-interventional one-time data used in our method are much easier and cheaper to obtain. (2) Our method does not estimate parameters, and only calculate the mean and variance of the expression level. Some other methods need to estimate many parameters or approximate some complicated quantities, meaning that they need large data size and high data accuracy. Therefore, our method is easy to calculate, and need lower data accuracy and smaller data size. (3) Our method has few restrictions on the model, making them applicable to various scenarios with different dynamics. In sum, our method is simple and universal, and have lower requirements on data quality.

Compared to other methods, our method has some disadvantages: (1) The GRN structure needs to be known. (2) Our method does not work for certain genes, depending on regulatory relationships. Proposition~\ref{prop2} only works for a gene that is not regulated by other genes, and we require its expression to be autonomous; Proposition~\ref{np} only works for a gene that is not in a feedback loop. (3) Proposition~\ref{np} requires the per molecule degradation rate to be a constant, and it cannot provide information about autoregulation if $\text{VMR}\ge 1$. (4) Our method only works for cells at equilibrium. Thus time series data that contain time-specific information cannot be utilized other than treated as one-time data. With just stationary distribution, sometimes it is impossible to build the causal relationship (including autoregulation) \cite{wang2020causal}. Thus with this data type, some disadvantages are inevitable.

\section{Scenario of a single isolated gene}
\label{auto}

\subsection{Setup}
We first consider the expression level (e.g., mRNA count) of one gene $V$ in a single cell. At the single-cell level, gene expression is essentially stochastic, and we use a random variable $X$ to represent the mRNA count of $V$. We assume $V$ is not in a feedback loop. We also assume all environmental factors and other genes that can affect $X$ are kept at constant levels, so that we can focus on $V$ alone. This can be achieved if no other genes point to gene $V$ in the GRN, such as PIP3 in Fig.~\ref{grn}. Then we assume that the expression of $V$ is autonomous, thus $X$ satisfies a time-homogeneous Markov chain defined on $\mathbb{Z}^*$. 

Assume that the mRNA synthesis rate at $X(t)=n-1$, namely the transition rate from $X=n-1$ to $X=n$, is $f_n>0$. Assume that with $n$ mRNA molecules, the degradation rate for each mRNA molecule is $g_n>0$. Then the overall degradation rate at $X(t)=n$, namely the transition rate from $X=n$ to $X=n-1$, is $g_nn$. The associated master equation is 
\begin{equation}
	\begin{split}
		\mathrm{d}\mathbb{P}[X(t)=n]/\mathrm{d}t=&\mathbb{P}[X(t)=n+1]g_{n+1}(n+1)+\mathbb{P}[X(t)=n-1]f_n\\
		&-\mathbb{P}[X(t)=n](f_{n+1}+g_nn).
	\end{split}
	\label{eq0}
\end{equation}
Define the relative growth rate $h_n=f_n/g_n$. We assume that $h_n$ has a finite upper bound. This means that as time tends to infinity, the process reaches equilibrium. Thus at equilibrium, (1) the stationary probability distribution $P_n=\lim_{t\to\infty}\mathbb{P}[X(t)=n]$ exists, and $P_n=P_{n-1}h_n/n$; (2) the mean $\mathbb{E}(X)$ and the variance $\sigma^2(X)$ are finite \cite{wang2022discrete}. 

If $h_n>h_{n-1}$ for some $n$, then there exists positive autoregulation. If $h_n<h_{n-1}$ for some $n$, then there exists negative autoregulation. If there is no autoregulation, then  $h_n$ is a constant $h$, and the stationary distribution is Poisson with parameter $h$. In this setting, positive autoregulation and negative autoregulation might coexist, meaning that $h_{n+1}<h_n$ for some $n$ and $h_{n'+1}>h_{n'}$ for some $n'$.

\subsection{Theoretical results}

With single-cell non-interventional one-time gene expression data for one gene, we have the stationary distribution of the Markov chain $X$. We can infer the existence of autoregulation with the VMR of $X$, defined as $\text{VMR}(X)=\sigma^2(X)/\mathbb{E}(X)$. The idea is that if we let $f_n$ increase/decrease with $n$, and control $g_n$ to make $\mathbb{E}(X)$ invariant, then the variance $\sigma^2(X)$ increases/decreases \cite{wang2018some}. We shall prove that $\text{VMR}>1$ means positive autoregulation, and $\text{VMR}<1$ means negative autoregulation. Notice that $\text{VMR}>1$ does not exclude the possibility that negative autoregulation exists for some expression level. This also applies to $\text{VMR}<1$ and positive autoregulation.

We can illustrate this result with a linear model: set $f_n=k+b(n-1)$, $g_n=c$. Here $b$ (can be positive or negative) is the strength of autoregulation, and $c$ satisfies $c>0$ and $c-b>0$. Multiply Eq.~\ref{eq0} by $n$ and $n(n-1)$ and take summation, then we can calculate that $\text{VMR}=1+b/(c-b)$. Therefore, $\text{VMR}>1$ means positive autoregulation, $b>0$; $\text{VMR}<1$ means negative autoregulation, $b<0$; $\text{VMR}=1$ means no autoregulation, $b=0$. 

\begin{lemma}
	Consider the Markov chain model for one gene with general transition coefficients $f_n,g_n$, described by Eq.~\ref{eq0}. Calculate $\text{VMR}(X)$ at stationary. (1) Assume $h_{n+1}\ge h_n$ for all $n$. We have $\text{VMR}(X)\ge 1$; moreover, $\text{VMR}(X)= 1$ if and only if $h_{n+1}= h_n$ for all $n$. (2) Assume $h_{n+1}\le h_n$ for all $n$. We have $\text{VMR}(X)\le 1$; moreover, $\text{VMR}(X)= 1$ if and only if $h_{n+1}= h_n$ for all $n$. 
	\label{lemma0}
\end{lemma}
From Lemma~\ref{lemma0}, we can directly obtain the following proposition.
\begin{proposition}
	\label{prop2}	
	In the setting of Lemma~\ref{lemma0}, (1) If $\text{VMR}(X)>1$, then there exist values of $n$ for which $h_{n+1}>h_n$; thus this gene has positive autoregulation. (2) If $\text{VMR}(X)<1$, then there exist values of $n$ for which $h_{n+1}<h_n$; thus this gene has negative autoregulation. (3) If $\text{VMR}(X)=1$, then either (A) $h_{n+1}=h_n$ for all $n$, meaning that this gene has no autoregulation; or (B) $h_{n+1}<h_n$ for some $n$ and $h_{n'+1}>h_{n'}$ for some $n'$, meaning that this gene has both positive and negative autoregulation (at different expression levels).
\end{proposition}
\begin{remark}
Proposition~\ref{prop2} requires that the gene expression is autonomous. In reality, many genes are non-autonomous, and transcriptional/translational bursting can make the VMR to be larger than $100$ \cite{paulsson2005models}.
\end{remark}
\begin{remark}
Results similar to Proposition~\ref{prop2} have been proven in a non-autonomous model of gene expression \cite{jia2017stochastic}.
\end{remark}
\begin{proof}[Proof of Lemma~\ref{lemma0}]
	Define $\lambda=-\log P_0$, so that $P_0=\exp(-\lambda)$. Define $d_n=\prod_{i=1}^{n}h_i>0$ and stipulate that $d_0=1$. We can see that 
	\[\frac{d_nd_{n+2}}{d_{n+1}^2}=\frac{h_{n+2}}{h_{n+1}}.\]
	Also, 
	\[P_n=P_{n-1}f_n/(g_n n)=P_{n-1}h_n/n=\cdots=P_0(\prod_{i=1}^{n}h_i)/n!\ =e^{-\lambda}\frac{d_n}{n!}.\]
	Then 
	\begin{equation*}
		\begin{split}
			\mathbb{E}(X^2)-\mathbb{E}(X)&=\sum_{n=1}^{\infty}(n^2-n)P_n=e^{-\lambda}\sum_{n=1}^{\infty}(n^2-n)\frac{d_n}{n!}\\
			&=e^{-\lambda}\sum_{n=2}^{\infty}\frac{d_n}{(n-2)!}=e^{-\lambda}\sum_{n=0}^{\infty}\frac{d_{n+2}}{n!},
		\end{split}
	\end{equation*}
	\[[\mathbb{E}(X)]^2=\left(\sum_{n=1}^{\infty}nP_n\right)^2=e^{-2\lambda}\left(\sum_{n=1}^{\infty}n\frac{d_n}{n!}\right)^2=e^{-2\lambda}\left(\sum_{n=0}^{\infty}\frac{d_{n+1}}{n!}\right)^2.\]
	Besides,
	\[1=\sum_{n=0}^{\infty}P_n=e^{-\lambda}\sum_{n=0}^{\infty}\frac{d_n}{n!}.\]
	Now we have 
	\[\mathbb{E}(X^2)-\mathbb{E}(X)-[\mathbb{E}(X)]^2=e^{-2\lambda}\left(\sum_{n=0}^{\infty}\frac{d_n}{n!}\right)\left(\sum_{n=0}^{\infty}\frac{d_{n+2}}{n!}\right)-e^{-2\lambda}\left(\sum_{n=0}^{\infty}\frac{d_{n+1}}{n!}\right)^2.\]
	
	(1) Assume $h_{n+1}\ge h_n$ for all $n$. Then 
	
	\begin{equation}
		\label{eqe}
		\begin{split}
			&\mathbb{E}(X^2)-\mathbb{E}(X)-[\mathbb{E}(X)]^2\\
			\ge & e^{-2\lambda}\left(\sum_{n=0}^{\infty}\frac{\sqrt{d_nd_{n+2}}}{n!}\right)^2- e^{-2\lambda}\left(\sum_{n=0}^{\infty}\frac{d_{n+1}}{n!}\right)^2\ge 0.
		\end{split}
	\end{equation}
	Here the first inequality is from the Cauchy inequality, and the second inequality is from $d_nd_{n+2}\ge d_{n+1}^2$ for all $n$. Then $\text{VMR}(X)=\{\mathbb{E}(X^2)-[\mathbb{E}(X)]^2\}/\mathbb{E}(X)\ge 1$. Equality holds if and only if $d_n/d_{n+2}=d_{n+1}/d_{n+3}$ for all $n$ (the first inequality of Eq.~\ref{eqe}) and $d_nd_{n+2}=d_{n+1}^2$ for all $n$ (the second inequality of Eq.~\ref{eqe}). The equality condition is equivalent to $h_{n+1}=h_n$ for all $n$.
	
	(2) Assume $h_{n+1}\le h_n$ for all $n$. Then $d_{n+2}/d_{n+1}\le d_{n+1}/d_n$, and $d_n\le h_1^n$ for all $n$. Define 
	\[H(t)=\sum_{n=0}^{\infty}\frac{d_n}{n!}t^n.\]
	Since $0<d_n\le h_1^n$, this series converges for all $t\in\mathbb{C}$, so that $H(t)$ is a well-defined analytical function on $\mathbb{C}$, and 
	\[H'(t)=\sum_{n=0}^{\infty}\frac{d_{n+1}}{n!}t^n,\ \text{ and }\ H''(t)=\sum_{n=0}^{\infty}\frac{d_{n+2}}{n!}t^n.\]
	In the following, we only consider $H(t),H'(t),H''(t)$ as real functions for $t\in\mathbb{R}$.
	
	To prove $\text{VMR}(X)\le 1$, we just need to prove $\mathbb{E}(X^2)-\mathbb{E}(X)-[\mathbb{E}(X)]^2=e^{-2\lambda}\{H(1)H''(1)-[H'(1)]^2\}\le 0$. However, we shall prove $H''(t)H(t)\le [H'(t)]^2$ for all $t\in \mathfrak{I}$, where $\mathfrak{I}=(a,b)$ is a fixed interval in $\mathbb{R}$ with $0<a<1$ and $1<b<\infty$. Thus $t=1$ is an interior point of $\mathfrak{I}$. Since $H(t),H'(t),H''(t)$ have positive lower bounds on $\mathfrak{I}$, the following statements are obviously equivalent: (i) $H''(t)H(t)\le [H'(t)]^2$ for all $t\in\mathfrak{I}$; (ii) $\{\log[H'(t)/H(t)]\}'\le 0$ for all $t\in\mathfrak{I}$; (iii) $\log[H'(t)/H(t)]$ is non-increasing on $\mathfrak{I}$; (iv) $H'(t)/H(t)$ is non-increasing on $\mathfrak{I}$. To prove (i), we just need to prove (iv).
	
	Consider any $t_1,t_2\in \mathfrak{I}$ with $t_1\le t_2$ and any $p,q\in \mathbb{N}$ with $p\ge q$. Since $d_{p+1}/d_p\le d_{q+1}/d_q$, and $t_1^{p-q}\le t_2^{p-q}$, we have 
	\[
	d_pd_qt_1^qt_2^q(\frac{d_{p+1}}{d_p}-\frac{d_{q+1}}{d_q})(t_1^{p-q}-t_2^{p-q})\ge 0,
	\]
	which means 
	\[
	d_{p+1}d_qt_1^pt_2^q+d_{q+1}d_p t_1^qt_2^p\geq d_{p+1}d_qt_2^pt_1^q+d_{q+1}d_pt_2^qt_1^p.
	\]
	Sum over all $p,q\in \mathbb{N}$ with $p\ge q$ to obtain
	\begin{equation*}
		\begin{split}
			H'(t_1)H(t_2)&=(\sum_{n=0}^{\infty}\frac{d_{n+1}}{n!}t_1^n)(\sum_{n=0}^{\infty}\frac{d_n}{n!}t_2^n)\\
			&\ge (\sum_{n=0}^{\infty}\frac{d_{n+1}}{n!}t_2^n)(\sum_{n=0}^{\infty}\frac{d_n}{n!}t_1^n)=H'(t_2)H(t_1).
		\end{split}
	\end{equation*}
	Thus $H'(t_1)/H(t_1)\ge H'(t_2)/H(t_2)$ for all $t_1,t_2\in \mathfrak{I}$ with $t_1\le t_2$. This means $H''(t)H(t)\le [H'(t)]^2$ for all $t\in\mathfrak{I}$, and $\text{VMR}(X)\le 1$.
	
	About the condition for the equality to hold, assume $h_{n'+1}<h_{n'}$ for a given $n'$. Then 
	\[
	d_{n'}d_{n'-1}t_1^{n'-1}t_2^{n'-1}(\frac{d_{n'+1}}{d_{n'}}-\frac{d_{n'}}{d_{n'-1}})(t_1-t_2)\ge C(t_2-t_1)
	\]
	for all $t_1,t_2\in \mathfrak{I}$ with $t_1\le t_2$ and a constant $C$ that does not depend on $t_1,t_2$. Therefore, 
	\begin{equation*}
		\begin{split}
			&[H'(t_1)/H(t_1)-H'(t_2)/H(t_2)]\cdot[H(t_1)H(t_2)]\\
			=&(\sum_{n=0}^{\infty}\frac{d_{n+1}}{n!}t_1^n)(\sum_{n=0}^{\infty}\frac{d_n}{n!}t_2^n)- (\sum_{n=0}^{\infty}\frac{d_{n+1}}{n!}t_2^n)(\sum_{n=0}^{\infty}\frac{d_n}{n!}t_1^n)\\
			\ge &d_{n'}d_{n'-1}t_1^{n'-1}t_2^{n'-1}(\frac{d_{n'+1}}{d_{n'}}-\frac{d_{n'}}{d_{n'-1}})(t_1-t_2)\\
			\ge& C(t_2-t_1).
		\end{split}
	\end{equation*}
	Since $H(t)$ has a finite positive upper bound $A$ and a positive lower bound $B$ on $\mathfrak{I}$, we have \[H'(t_1)/H(t_1)-H'(t_2)/H(t_2)\ge C(t_2-t_1)/A^2,\] 
	meaning that 
	\[\forall t\in\mathfrak{I},\,\,[H'(t)/H(t)]'=\{H(t)H''(t)-[H'(t)]^2\}/[H(t)]^2\le -C/A^2,\]
	and thus 
	\[\forall t\in\mathfrak{I},\,\,H(t)H''(t)-[H'(t)]^2\le -CB^2/A^2<0.\]
	Therefore, $\mathbb{E}(X^2)-\mathbb{E}(X)-[\mathbb{E}(X)]^2=e^{-2\lambda}\{H(1)H''(1)-[H'(1)]^2\}<0$, and $\text{VMR}(X)<1$. 
	
	We have proved in (1) that if $h_{n+1}=h_n$ for all $n$, then $\text{VMR}(X)=1$. Thus when $h_{n+1}\le h_n$ for all $n$, $\text{VMR}(X)=1$ if and only if $h_{n+1}=h_n$ for all $n$.
\end{proof}

In sum, for the Markov chain model of one gene (by assuming the expression to be autonomous), when we have the stationary distribution from single-cell non-interventional one-time gene expression data, we can calculate the VMR of $X$. $\text{VMR}(X)>1$ means the existence of positive autoregulation, and $\text{VMR}(X)<1$ means the existence of negative autoregulation. $\text{VMR}(X)=1$ means either (1) no autoregulation exists; or (2) both positive autoregulation and negative autoregulation exist (at different expression levels).

\section{Scenario of multiple entangled genes}
\label{multi}
\subsection{Setup}
We consider $m$ genes $V_1,\ldots,V_m$ for a single cell. Denote their expression levels by random variables $X_1,\ldots,X_m$. The change of $X_i$ can depend on $X_j$ (mutual regulation) and $X_i$ itself (autoregulation). Since these genes regulate each other, and their expression levels are not fixed, we cannot consider them separately. If the expression of gene $V_k$ is non-autonomous, we also need to add its interior factors (gene activation state, etc.) into $X_1,\ldots,X_m$.

We can use a continuous-time Markov chain on $(\mathbb{Z}^*)^m$ to describe the dynamics. Each state of this Markov chain, $(X_1=n_1,\ldots,X_i=n_i,\ldots,X_m=n_m)$, can be abbreviated as $\boldsymbol{n}=(n_1,\ldots,n_i,\ldots,n_m)$. For gene $V_i$, the transition rate of $n_i-1\to n_i$ is $f_i(\boldsymbol{n})$, and the the transition rate of $n_i\to n_i-1$ is $g_i(\boldsymbol{n})n_i$. The master equation of this process is 
\begin{equation}
	\label{me}
	\begin{split}
		\mathrm{d}\mathbb{P}(\boldsymbol{n})/\mathrm{d}t=&\sum_i\mathbb{P}(n_1,\ldots,n_i+1,\ldots,n_m)g_i(n_1,\ldots,n_i+1,\ldots,n_m)(n_i+1)\\
		&+\sum_i\mathbb{P}(n_1,\ldots,n_i-1,\ldots,n_m)f_i(\boldsymbol{n})\\
		&-\mathbb{P}(\boldsymbol{n})\sum_i[f_i(n_1,\ldots,n_i+1,\ldots,n_m)+g_i(\boldsymbol{n})n_i].
	\end{split}		
\end{equation}	
Define $\boldsymbol{n}_{\bar{i}}=(n_1,\ldots,n_{i-1},n_{i+1},\ldots,n_m)$. Define $h_i(\boldsymbol{n})=f_i(\boldsymbol{n})/g_i(\boldsymbol{n})$ to be the relative growth rate of gene $V_i$. Autoregulation means for some fixed $\boldsymbol{n}_{\bar{i}}$,  $h_i(\boldsymbol{n})$ is (locally) increasing/decreasing  with $n_i$, thus $f_i(\boldsymbol{n})$ increases/decreases and/or $g_i(\boldsymbol{n})$ decreases/increases with $n_i$. For the non-autonomous scenario, another possibility for autoregulation is that $V_i$ can affect its interior factors.

\subsection{Theoretical results}
With expression data for multiple genes, there are various methods to infer the regulatory relationships between different genes, so that the GRN can be reconstructed \cite{wang2022inference}. In the GRN, if there is a directed path from gene $V_i$ to gene $V_j$, meaning that $V_i$ can directly or indirectly regulate $V_j$, then $V_i$ is an ancestor of $V_j$, and $V_j$ is a descendant of $V_i$.

Fix a gene $V_k$ in a GRN. We first consider a simple case that $V_k$ is not contained in any directed cycle (feedback loop), which means no gene is both an ancestor and a descendant of $V_k$, such as PIP2 in Fig.~\ref{grn}. This means $V_k$ itself is a strongly connected component of the GRN. This condition is automatically satisfied if the GRN has no directed cycle. If the expression of $V_k$ is non-autonomous, we need to add the interior factors of $V_k$ into $V_1,\ldots,V_m$, and it is acceptable that $V_k$ regulates its interior factors. In this case, we can prove that if $V_k$ does not regulate itself, meaning that $h_k(\boldsymbol{n})$ is a constant for fixed $\boldsymbol{n}_{\bar{k}}$ and different $n_k$, and $X_k$ does not affect its interior factors (if non-autonomous), then $\text{VMR}(X_k)\ge 1$. The reason is that $\text{VMR}<1$ requires either a feedback loop or autoregulation. We need to assume that the per molecule degradation rate $g_k(\cdot)$ for $V_k$ is not affected by $V_1,\ldots,V_m$, which is not always true in reality \cite{karamyshev2018lost}. With this result, when $\text{VMR}<1$, there might be autoregulation.
\begin{proposition}
	\label{np}
	Consider the Markov chain model for multiple genes, described by Eq.~\ref{me}. Assume the GRN has no directed cycle, or at least there is no directed cycle that contains gene $V_k$. Assume $g_k(\cdot)$ is a constant for all $\boldsymbol{n}$. If $V_k$ has no autoregulation, meaning that $h_k(\cdot)$ and $f_k(\cdot)$ do not depend on $n_k$, and $V_k$ does not regulate its interior factors, then $V_k$ has $\text{VMR}\ge 1$. Therefore, $V_k$ has $\text{VMR}< 1$ means $V_k$ has autoregulation.
\end{proposition}
This Proposition is in an unpublished work by Paulsson et al., who study a similar problem \cite{hilfinger2016constraints,yan2019kinetic}. It also appears in a preprint by Mahajan et al. \cite{mahajan2021topological}, but the proof is based on a linear approximation. We propose a rigorous proof independently.
\begin{proof}
	Denote the expression level of $V_k$ by $W$. Assume the ancestors of $V_k$ are $V_1,\ldots,V_l$. For simplicity, denote the expression levels of $V_1,\ldots,V_l$ by a (high-dimensional) random variable $Y$. Assume $V_k$ has no autoregulation. Since $V_k$ does not regulate $V_1,\ldots,V_l$, $W$ does not affect $Y$. Denote the transition rate from $Y=i$ to $Y=j$ by $q_{ij}\ge 0$. Stipulate that $q_{ii}=-\sum_{j\ne i}q_{ij}$. When $Y=i$, the transition rate from $W=n$ to $W=n+1$ is $F_i$ (does not depend on $n$), and the transition rate from $W=n$ to $W=n-1$ is $G$. 
	
	The master equation of this process is 
	\begin{equation*}
		\begin{split}
			&\mathrm{d}\mathbb{P}[W(t)=n,Y(t)=i]/\mathrm{d}t\\
			=&\mathbb{P}[W(t)=n-1,Y(t)=i]F_i+\mathbb{P}[W(t)=n+1,Y(t)=i]G(n+1)\\
			&+\sum_{j\ne i}\mathbb{P}[W(t)=n,Y(t)=j]q_{ji}-\mathbb{P}[W(t)=n,Y(t)=i](F_i+Gn+\sum_{j\ne i}q_{ij}).
		\end{split}
	\end{equation*}
	Assume there is a unique stationary probability distribution $P_{n,i}=\lim_{t\to \infty}\mathbb{P}[W(t)=n,Y(t)=i]$. Then we have 
	\begin{equation}
		\label{eq1}
		P_{n,i}\Big[F_i+Gn+\sum_{j}q_{ij}\Big]=P_{n-1,i}F_i+P_{n+1,i}G(n+1)+\sum_{j}P_{n,j}q_{ji}.
	\end{equation}
	Define $P_i=\sum_n P_{n,i}$. Sum over $n$ for Eq.~\ref{eq1} to obtain 
	\begin{equation}
		\label{eqz1}
		P_i\sum_{j}q_{ij}=\sum_{j}P_jq_{ji},
	\end{equation}
	meaning that $P_i$ is the stationary probability distribution of $Y$.
	
	Define $W_i$ to be $W$ conditioned on $Y=i$ at stationary. Then $\mathbb{P}(W_i=n)=\mathbb{P}(W=n\mid Y=i)=P_{n,i}/P_i$, and $\mathbb{E}(W_i)=\sum_n n P_{n,i}/P_i$. Multiply Eq.~\ref{eq1} by $n$ and sum over $n$ to obtain 
	\begin{equation}
		\label{eq2}
		\Big(G+\sum_j q_{ij}\Big)P_i\mathbb{E}(W_i)=F_iP_i+\sum_j q_{ji}P_j\mathbb{E}(W_j).
	\end{equation}
	Sum over $i$ for Eq.~\ref{eq2} to obtain
	\begin{equation}
		\label{eq3}
		G\sum_iP_i\mathbb{E}(W_i)=\sum_iF_iP_i.
	\end{equation}
	Multiply Eq.~\ref{eq1} by $n^2$ and sum over $n$ to obtain 
	\begin{equation}
		\label{eq4}
		\Big(2G+\sum_j q_{ij}\Big)P_i\mathbb{E}(W_i^2)=F_iP_i+(2F_i+G)P_i\mathbb{E}(W_i)+\sum_j q_{ji}P_j\mathbb{E}(W_j^2).
	\end{equation}
	Sum over $i$ for Eq.~\ref{eq4} to obtain
	\begin{equation}
		\label{eq5}
		2G\sum_iP_i\mathbb{E}(W_i^2)=\sum_iF_iP_i+2\sum_i F_iP_i\mathbb{E}(W_i)+G\sum_iP_i\mathbb{E}(W_i).
	\end{equation}
	Multiply Eq.~\ref{eq2} by $\mathbb{E}(W_i)$ and sum over $i$ to obtain
	\begin{equation}
		\label{eqz2}
		\begin{split}
			&G\sum_iP_i[\mathbb{E}(W_i)]^2+\sum_{i,j}P_iq_{ij}[\mathbb{E}(W_i)]^2\\
			=&\sum_iF_iP_i\mathbb{E}(W_i)+\sum_{i,j}P_jq_{ji}\mathbb{E}(W_i)\mathbb{E}(W_j).
		\end{split}
	\end{equation}
	Then we have
	\begin{equation} 
		\label{eq6}
		\begin{split}
			&\sum_iF_iP_i\mathbb{E}(W_i)-G\sum_iP_i[\mathbb{E}(W_i)]^2 \\
			=&\sum_{i,j}P_iq_{ij}[\mathbb{E}(W_i)]^2-\sum_{i,j}P_jq_{ji}\mathbb{E}(W_i)\mathbb{E}(W_j)\\
			=&\frac{1}{2}\Big\{\sum_{i,j}P_iq_{ij}[\mathbb{E}(W_i)]^2+\sum_i[\mathbb{E}(W_i)]^2\sum_jP_iq_{ij}-2\sum_{i,j}P_iq_{ij}\mathbb{E}(W_i)\mathbb{E}(W_j)\Big\}\\
			=&\frac{1}{2}\Big\{\sum_{i,j}P_iq_{ij}[\mathbb{E}(W_i)]^2+\sum_i[\mathbb{E}(W_i)]^2\sum_jP_jq_{ji}-2\sum_{i,j}P_iq_{ij}\mathbb{E}(W_i)\mathbb{E}(W_j)\Big\}\\
			=&\frac{1}{2}\Big\{\sum_{i,j}P_iq_{ij}[\mathbb{E}(W_i)]^2+\sum_{i,j}P_iq_{ij}[\mathbb{E}(W_j)]^2-2\sum_{i,j}P_iq_{ij}\mathbb{E}(W_i)\mathbb{E}(W_j)\Big\}\\
			=&\frac{1}{2}\sum_{i,j}P_iq_{ij}[\mathbb{E}(W_i)-\mathbb{E}(W_j)]^2\ge 0.
		\end{split}
	\end{equation}
	Here the first equality is from Eq.~\ref{eqz2}, the third equality is from Eq.~\ref{eqz1}, and other equalities are equivalent transformations.
	
	Now we have  
	\begin{equation*}
		\begin{split}
			&\mathbb{E}(W^2)-\mathbb{E}(W)-[\mathbb{E}(W)]^2\\
			=&\sum_iP_i\mathbb{E}(W_i^2)-\sum_iP_i\mathbb{E}(W_i)-\Big[\sum_iP_i\mathbb{E}(W_i)\Big]^2\\
			=&\frac{1}{G}\sum_i F_iP_i\mathbb{E}(W_i)+\sum_i P_i\mathbb{E}(W_i)-\sum_iP_i\mathbb{E}(W_i)-\Big[\sum_iP_i\mathbb{E}(W_i)\Big]^2\\
			\ge &\sum_iP_i[\mathbb{E}(W_i)]^2-\Big[\sum_iP_i\mathbb{E}(W_i)\Big]^2\\
			=&\Big(\sum_iP_i\Big)\sum_iP_i[\mathbb{E}(W_i)]^2-\Big[\sum_iP_i\mathbb{E}(W_i)\Big]^2\ge 0,\\
		\end{split}
	\end{equation*}
	where the first equality is by definition, the second equality is from Eqs.~\ref{eq3},\ref{eq5}, the first inequality is from Eq.~\ref{eq6}, the third equality is from $\sum_i P_i=1$, and the second inequality is the Cauchy inequality.
	
	Since $\mathbb{E}(W^2)-[\mathbb{E}(W)]^2\ge \mathbb{E}(W)$, $\text{VMR}(W)=\{\mathbb{E}(W^2)-[\mathbb{E}(W)]^2\}/ \mathbb{E}(W)\ge 1$.
\end{proof}

We hypothesize that the requirement for $g_k(\cdot)$ in Proposition~\ref{np} can be dropped.

\begin{conjecture}
	\label{conj2}
	Assume $V_k$ is not contained in a directed cycle in the GRN, and $V_k$ does not regulate its interior factors. If $V_k$ has no autoregulation, meaning that $h_k(\cdot)$ does not depend on $n_k$ (but might depend on $\boldsymbol{n}_{\bar{k}}$), then $V_k$ has $\text{VMR}\ge 1$. 
\end{conjecture}

If the GRN has directed cycles, there is a conjecture by Paulsson et al. \cite{hilfinger2016constraints,yan2019kinetic}, which has been numerically verified but not proved yet. 
\begin{conjecture}
	Assume for each $V_i$, $g_i(\cdot)$ does not depend on $\boldsymbol{n}$, and $f_i(\cdot)$ does not depend on $n_i$ (no autoregulation). Then for at least one gene $V_j$, we have $\text{VMR}\ge 1$.
	\label{conj3}
\end{conjecture}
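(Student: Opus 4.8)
The plan is to follow the template of Proposition \ref{np}: turn ``$\text{VMR}(X_j)\ge 1$'' into a covariance statement and then argue globally. Running the moment computations from the proof of Proposition \ref{np} --- apply stationarity of the chain to $\phi=X_i$, $\phi=X_i^2$ and $\phi=X_iX_j$, using only that every $g_i$ is constant --- gives the exact identities $\text{Cov}(f_i,X_i)=g_i\big(\sigma^2(X_i)-\mathbb{E}(X_i)\big)$ for all $i$ and $\text{Cov}(f_i,X_j)+\text{Cov}(f_j,X_i)=(g_i+g_j)\,\text{Cov}(X_i,X_j)$ for $i\ne j$. So $\text{VMR}(X_i)\ge 1\iff \text{Cov}(f_i,X_i)\ge 0\iff C_{ii}\ge\mathbb{E}(X_i)$, where $C=\big(\text{Cov}(X_i,X_j)\big)_{ij}$ is positive definite for an irreducible chain. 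Collecting the two families of identities into matrices $R=\big(\text{Cov}(f_i,X_j)\big)_{ij}$, $G=\text{diag}(g_i)$, $D=\text{diag}\big(g_i\mathbb{E}(X_i)\big)$, they become the single matrix identity $R+R^{\mathsf{T}}=GC+CG-2D$. Thus the conjecture is equivalent to: one cannot have $C_{ii}<\mathbb{E}(X_i)$ for \emph{all} $i$ simultaneously.

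Two easy reductions come first. If some $V_j$ has no regulator then $f_j$ is constant, so $\text{Cov}(f_j,X_j)=0$ and $\text{VMR}(X_j)=1$; and passing to the condensation of the GRN (its acyclic graph of strongly connected components) there is always a source component whose incoming regulation is internal, so it suffices to treat the case where the whole GRN is a single strongly connected component --- Proposition \ref{np} already settles everything acyclic.

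For the core step the tool I would use is: for any symmetric $W$, taking traces against $W$ in $R+R^{\mathsf{T}}=GC+CG-2D$ gives $\text{tr}(WR)=\text{tr}(WGC)-\text{tr}(WD)$. Specialize to $W=C^{-1}$. Then $\text{tr}(C^{-1}GC)=\text{tr}(G)$, while Cauchy--Schwarz gives $(C^{-1})_{ii}\,C_{ii}\ge 1$, so if $C_{ii}<\mathbb{E}(X_i)=D_{ii}/g_i$ held for every $i$ we would get $\text{tr}(C^{-1}D)=\sum_i D_{ii}(C^{-1})_{ii}>\sum_i D_{ii}\big(g_i/D_{ii}\big)=\text{tr}(G)$. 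Hence a failure of the conjecture forces $\text{tr}(C^{-1}R)=\text{tr}(G)-\text{tr}(C^{-1}D)<0$, and the conjecture follows once one proves the autoregulation-free inequality
\[
\text{tr}\big(C^{-1}R\big)\;=\;\sum_i \text{Cov}\!\big(f_i,\,(C^{-1}\boldsymbol{X})_i\big)\;\ge\;0 .
\]
In the affine case $f_i=\sum_{j\ne i}A_{ij}X_j+b_i$ we have $R=AC$ with $A$ of zero diagonal (no autoregulation), so $\text{tr}(C^{-1}R)=\text{tr}(A)=0$ and the conjecture holds; this is an exact, approximation-free version of the argument of Mahajan et al., since the identity $\text{tr}(C^{-1}D)=\text{tr}(G)$ involves no linearization.

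The hard part is this last inequality when the regulation functions are genuinely nonlinear. Writing $(C^{-1}\boldsymbol{X})_i=(C^{-1})_{ii}\,\varepsilon_i$, with $\varepsilon_i$ the residual of $X_i$ after linear regression on $\boldsymbol{X}_{\bar{i}}$, one needs $\sum_i (C^{-1})_{ii}\,\text{Cov}(f_i,\varepsilon_i)\ge 0$. The structure available is that $f_i$ is measurable with respect to $\boldsymbol{X}_{\bar{i}}$ --- which kills the within-group term and yields $\text{Cov}(f_i,X_i)=\text{Cov}\big(f_i,\mathbb{E}(X_i\mid\boldsymbol{X}_{\bar{i}})\big)$ --- and that $\varepsilon_i$ is uncorrelated with each $X_k$, $k\ne i$; but $\varepsilon_i$ need not be \emph{independent} of $\boldsymbol{X}_{\bar{i}}$, so nothing forces $\text{Cov}(f_i,\varepsilon_i)$ to be nonnegative term by term, and I do not see how to extract the signed sum from second moments alone. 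I expect the resolution to need the full stationarity of the chain: either its higher moment relations, or a path-space representation of each $X_i$ as a randomly modulated immigration--death process (so that $\sigma^2(X_i)-\mathbb{E}(X_i)=\text{Var}(\Lambda_i)\ge 0$ whenever the modulating input $\Lambda_i$ is autonomous, which is exactly why the acyclic case is easy), turned into a signed estimate once feedback makes $\Lambda_i$ depend on $X_i$. That last maneuver is, I believe, the genuine obstacle and the reason the conjecture is still open.
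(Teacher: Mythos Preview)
The paper does not prove this statement: it is explicitly labeled a \emph{conjecture}, attributed to Paulsson et al., and described as ``numerically verified but not proved yet.'' So there is no paper proof to compare against; your proposal is an attack on an open problem, not a reconstruction of an existing argument.

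That said, your framework is sound and goes well beyond anything the paper attempts. Your moment identities $\text{Cov}(f_i,X_i)=g_i(\sigma^2(X_i)-\mathbb{E}(X_i))$ and $\text{Cov}(f_i,X_j)+\text{Cov}(f_j,X_i)=(g_i+g_j)\text{Cov}(X_i,X_j)$ follow correctly from $\mathbb{E}[L\phi]=0$ at stationarity, and the matrix repackaging $R+R^{\mathsf T}=GC+CG-2D$ is clean. The trace trick with $W=C^{-1}$ is a genuine reduction: the conjecture is equivalent to $\text{tr}(C^{-1}R)\ge 0$, and in the affine case you get $\text{tr}(C^{-1}R)=\text{tr}(A)=0$ exactly, which upgrades the linear-approximation argument of Mahajan et al.\ to a rigorous statement for truly linear rates. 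The reductions (remove regulator-free genes; pass to a source strongly connected component) are also correct and useful.

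Where you stop is the right place to stop. Your honest diagnosis---that $\text{Cov}(f_i,\varepsilon_i)$ need not be termwise nonnegative because the linear-regression residual $\varepsilon_i$ is merely uncorrelated with, not independent of, $\boldsymbol{X}_{\bar{i}}$---identifies the real obstruction. Second-moment data alone encode only the symmetric part $R+R^{\mathsf T}$, while $\text{tr}(C^{-1}R)$ depends on $R$ itself; nothing in your identities constrains the antisymmetric part of $R$, so one should expect to need more than covariances. Your suggestion to exploit the path-space (modulated immigration--death) representation, or higher moment relations, is the natural next step, but as you say it is precisely the missing idea. In short: correct partial progress, correctly flagged as incomplete; the conjecture remains open in the paper and in your proposal.
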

Notice that Conjecture~\ref{conj3} does not hold if $g_i$ depends on $\boldsymbol{n}_{\bar{i}}$. One counterexample is $m=2$, $f_1(n_2)=g_1(n_2)=1$ for $n_2=2$, $f_1(n_2)=g_1(n_2)=0$ for $n_2\ne 2$, and $f_2(n_1)=g_2(n_1)=1$ for $n_1=2$, $f_2(n_1)=g_2(n_1)=0$ for $n_1\ne 2$. Then $\text{VMR}=2e/(4e-1)\approx 0.55$ for both genes.

Assume Conjecture~\ref{conj3} is correct. For $m$ genes, if we find that VMR for each gene is less than $1$, then we can infer that autoregulation exists, although we do not know which gene has autoregulation.

\section{Applying theoretical results to experimental data}
\label{app}

\begin{algorithm}[!htbp]
	\caption{Detailed workflow of inferring autoregulation with gene expression data.}
	\label{alg}
	\vspace{-\bigskipamount}
	\ \\
	\begin{enumerate}
		\item \textbf{Input} 
		
		\quad Single-cell non-interventional one-time expression data for genes $V_1,\ldots,V_m$
		
		\quad The structure of GRN that contains $V_1,\ldots,V_m$
		\item \textbf{Calculate} the VMR of each $V_k$\
		
		\item \textbf{If} $V_k$ is not in a directed cycle (like PIP2 in Fig.~\ref{grn}) and $\text{VMR}<1$
		
		\quad   \textbf{Output} $V_k$ has autoregulation
		
		\quad   // Assume the degradation of $V_k$ is not regulated by $V_1,\ldots,V_m$
		
		\textbf{Else} 
		
		\quad   \textbf{If} $V_k$ has no ancestor in the GRN (like PIP3 in Fig.~\ref{grn}) and $\text{VMR}>1$
		
		\quad\quad   \textbf{Output} $V_k$ has autoregulation
		
		\quad\quad //Assume the expression of $V_k$ is autonomous
		
		\quad   \textbf{Else}
		
		\quad\quad   \textbf{Output} We cannot determine whether $V_k$ has autoregulation
		
		\quad   \textbf{End} of if
		
		\textbf{End} of if

	\end{enumerate}
\end{algorithm}

We summarize our theoretical results into Algorithm~\ref{alg}. Proposition~\ref{prop2} applies to a gene that has no ancestor in the GRN. However, it requires the corresponding gene has autonomous expression, which is difficult to validate and often does not hold in reality. Thus the inference result by Proposition~\ref{prop2} for $\text{VMR}>1$ (positive autoregulation) is not very reliable. When $\text{VMR}<1$ and Proposition~\ref{prop2} could apply, we should instead apply Proposition~\ref{np} to determine the existence of autoregulation, since Proposition~\ref{np} does not require the expression to be autonomous, thus being much more reliable. Proposition~\ref{np} applies when the gene is not in a feedback loop and has $\text{VMR}<1$. Notice that our result cannot determine that a gene has no autoregulation. 

For a given gene without autoregulation, its expression level satisfies a Poisson distribution, and VMR is $1$. If we have $n$ samples of its expression level, then the sample VMR (sample variance divided by sample mean) asymptotically satisfies a Gamma distribution $\Gamma[(n-1)/2,2/(n-1)]$, and we can determine the confidence interval of sample VMR \cite{eden2010drawing}. If the sample VMR is out of this confidence interval, then we know that VMR is significantly different from $1$, and Propositions~\ref{prop2},\ref{np} might apply. 

We apply our method to four groups of single-cell non-interventional one-time gene expression data from experiments, where the corresponding GRNs are known. Notice that we need to convert indirect measurements into protein/mRNA count. See Table~\ref{tab} for our inference results and theoretical/experimental evidence that partially validates our results. See Appendix~\ref{app1} for details. There are 186 genes in these four data sets, and we can only determine that 12 genes have autoregulation (7 genes determined by Proposition~\ref{prop2}, and 5 genes determined by Proposition~\ref{np}). Not every VMR is less than $1$, so that Conjecture~\ref{conj3} does not apply. For the other 174 genes, Proposition~\ref{prop2} and Proposition~\ref{np} do not apply, and we do not know whether they have autoregulation. 

In some cases, we have experimental evidence that some genes have autoregulation, so that we can partially validate our inference results. Nevertheless, as discussed in the Introduction, there is no gold standard to evaluate our inference results. 

In the data set by Guo et al. \cite{guo2010resolution}, Sanchez-Castillo et al. \cite{sanchez2018bayesian} inferred that 17 of 39 genes have autoregulation, and 22 genes do not have autoregulation. We infer that 5 genes have autoregulation, and 34 genes cannot be determined. Here 3 genes are shared by both inference results to have autoregulation. Consider a random classifier that randomly picks 5 genes and claims they have autoregulation. Using Sanchez-Castillo et al. as the standard, this random classifier has probability $62.55\%$ to be worse than our result, and $10.17\%$ to be better than our result. Thus our inference result is better than a random classifier, but the advantage is not significant.

\begin{table}[]	
	\begin{tabular}{lllll}
		Source          &\begin{tabular}[c]{@{}l@{}}Propo-\\ sition~\ref{prop2}\end{tabular}                                                     & \begin{tabular}[c]{@{}l@{}}Propo-\\ sition~\ref{np}\end{tabular}                                                              & Theory                                                                 & Experiment                                                     \\
		\hline
		\begin{tabular}[c]{@{}l@{}}Guo\\ et al. \cite{guo2010resolution} \end{tabular}
		& \begin{tabular}[c]{@{}l@{}}FN1\\ \textbf{HNF4A}\end{tabular}               & \begin{tabular}[c]{@{}l@{}} \textbf{TCFAP2C}\\ \textbf{BMP4}\\ CREB312\end{tabular} & \begin{tabular}[c]{@{}l@{}}BMP4 \cite{sanchez2018bayesian}\\ HNF4A \cite{sanchez2018bayesian}\\  TCFAP2C \cite{sanchez2018bayesian}\end{tabular} & \begin{tabular}[c]{@{}l@{}}BMP4 \cite{pramono2016thrombopoietin}\\ HNF4A \cite{chahar2014chromatin}\\ TCFAP2C \cite{kidder2010examination}\end{tabular} \\
		\hline
		\begin{tabular}[c]{@{}l@{}}Psaila\\ et al. \cite{psaila2016single}\end{tabular}
		& \begin{tabular}[c]{@{}l@{}}BIM\\ CCND1\\ \textbf{ECT2}\\ PFKP\end{tabular} &                                                                     &                                                                        & ECT2 \cite{hara2006cytokinesis}                                                           \\
		\hline
		\begin{tabular}[c]{@{}l@{}}Moignard\\ et al. \cite{moignard2015decoding}\end{tabular}
		&                                                                   & \begin{tabular}[c]{@{}l@{}}EIF2B1\\ HOXD8\end{tabular}          &                                                                        &                                                                \\
		\hline
		\begin{tabular}[c]{@{}l@{}}Sachs\\ et al. \cite{sachs2005causal}\end{tabular}
		& PIP3                                                              &                                                                          &                                                                        &                                                               
	\end{tabular}
	\caption{The autoregulation inference results by our method on four data sets. Source column is the paper that contains this data set. Proposition~\ref{prop2} column is the genes that can be only inferred by Proposition~\ref{prop2} to have autoregulation. Proposition~\ref{np} column is the genes that can be inferred by Proposition~\ref{np} to have autoregulation. Theory column is the genes inferred by both our method and other theoretical works to have autoregulation. Experiment column is the genes inferred by both our method and other experimental works to have autoregulation. \textbf{Bold} font means the inferred gene with autoregulation is validated by other results. Details can be found in Appendix~\ref{app1}.}
	\label{tab}
\end{table}



\section{Conclusions}
\label{con}
For a single gene that is not affected by other genes, or a group of genes that form a connected GRN, we develop theoretical results to determine the existence of autoregulation. These results generalize known relationships between autoregulation and VMR by dropping restrictions on parameters. Our results only depend on VMR, which is easy to compute and more robust than other complicated statistics. We also apply our method to experimental data and detect some genes that might have autoregulation. We prove two propositions for Markov chains, which might have theoretical values. 

We introduce two conjectures that have been numerically verified but not yet proved. They are of theoretical interest and worth further consideration. The Markov chain models in this paper can be studied via lifting into a higher-dimensional space \cite{wang2020mathematical}, treating as a random dynamical system \cite{ye2016stochastic}, or as a branching process \cite{jiang2017phenotypic}. With the expression profiles of different genes, we can construct a similarity graph \cite{wang2021inference}. If we know the existence of autoregulation for some genes, we can use the similarity graph to infer other genes.

Our method requires \textbf{independent} and \textbf{identically} distributed samples from the \textbf{exact} \textbf{stationary} distribution of a \textbf{fully observed} Markov chain, plus a known \textbf{GRN}. Proposition~\ref{prop2} requires the expression is \textbf{autonomous}. Proposition~\ref{np} requires that the GRN has \textbf{no} directed \textbf{cycle}, and \textbf{degradation} is \textbf{not regulated}. If our inference fails, then some requirements are not met: (1) cells might affect each other, making the samples dependent; (2) cells are heterogeneous; (3) the measurements have extra errors; (4) the cells are not at stationary; (5) there exist unobserved variables that affect gene expression; (6) the GRN is inferred by a theoretical method, which can be interfered by the existence of autoregulation; (7) the expression is non-autonomous; (8) the GRN has unknown directed cycles; (9) the degradation rate is regulated by other genes. Such situations, especially the unobserved variables, are unavoidable. Therefore, current data might not satisfy these requirements, and our inference results should be interpreted as informative findings, not ground truths. In fact, other theoretical works that determine gene autoregulation, or general gene regulation, also need similar assumptions and might fail. Nevertheless, with the development of experimental technologies, there will be more data with higher quality that fit the requirements of our method. Thus we believe that our method will be more applicable in the future.

Cells keep growing and dividing, and the gene expression fluctuates along the cell cycle. Discussions on such non-stationary situations can be found in other papers \cite{cao2020analytical,swain2002intrinsic,skinner2016single,jia2021frequency}.

About cell heterogeneity, we prove a result in Appendix~\ref{app2} that if several cell types have $\text{VMR}\ge 1$, then for a mixed population of such cell types, we still have $\text{VMR}\ge1$. Therefore, cell heterogeneity does not fail Proposition~\ref{np}, since $\text{VMR}<1$ for the mixture of several cell types means $\text{VMR}<1$ for at least one cell type.

\appendix
\section{Details of applications on experimental data}
\label{app1}
In experiments, the expression levels of genes are not directly measured as mRNA or protein counts. Rather, they are measured as cycle threshold (Ct) values or fluorescence intensity values. Such indirected measurements need to be converted. Related details can be found in other papers \cite{jia2017stochastic}.

Guo et al. \cite{guo2010resolution} measured the expression (mRNA) levels of 48 genes for mouse embryo cells at different developmental stages. We consider three groups (16-cell stage, 32-cell stage, 64-cell stage) that have more than 50 samples. Sanchez-Castillo et al. \cite{sanchez2018bayesian} used such data to infer the GRN structure, including autoregulation, but the GRN only contains 39 genes. Thus we ignore the other 9 genes. In the inferred GRN, genes BMP4, CREB312, and TCFAP2C are not contained in directed cycles. In the 16-cell stage group with 75 samples, if there is no autoregulation, then the $95\%$ confidence interval of VMR is $[0.7041,1.3470]$. BMP4 ($\text{VMR}=0.2139$), CREB312 ($\text{VMR}=0.1971$), and TCFAP2C ($\text{VMR}=0.3468$) have significantly small VMR, and we can apply Proposition~\ref{np} to infer that BMP4, CREB312, and TCFAP2C might have autoregulation. In the other two groups, these genes do not have $\text{VMR}<1$, and the results are relatively weak. Besides, in the inferred GRN, genes FN1 and HNF4A have no ancestors. For the 16-cell stage with 75 samples, the VMR of FN1 and HNF4A are $3.4522$ and $1.3599$, outside of the $95\%$ confidence interval $[0.7041,1.3470]$; for the 32-cell stage with 113 samples, the VMR of FN1 and HNF4A are $93.1070$ and $46.7688$, outside of the $95\%$ confidence interval $[0.7554,1.2784]$; for the 64-cell stage with 159 samples, the VMR of FN1 and HNF4A are $117.3059$ and $93.9589$, outside of the $95\%$ confidence interval $[0.7917,1.2322]$. Thus we can apply Proposition~\ref{prop2} to infer that FN1 and HNF4A ($\text{VMR}>1$ for all three cell groups) might have positive autoregulation. Nevertheless, it is more likely that the expressions of FN1 and HNF4A are non-autonomous, and there is no autoregulation. Sanchez-Castillo et al. \cite{sanchez2018bayesian} inferred that BMP4, HNF4A, TCFAP2C have autoregulation. Besides, there is experimental evidence that BMP4 \cite{pramono2016thrombopoietin}, HNF4A \cite{chahar2014chromatin}, TCFAP2C \cite{kidder2010examination} have autoregulation. Therefore, our inference results are partially validated.

Psaila et al. \cite{psaila2016single} measured the expression (mRNA) levels of 90 genes for human megakaryocyte-erythroid progenitor cells. Chan et al. \cite{chan2017gene} inferred the GRN structure (autoregulation not included). In the inferred GRN, genes BIM, CCND1, ECT2, PFKP have no ancestors. BIM has 214 effective samples, and VMR is $187.7$, outside of the $95\%$ confidence interval $[0.8191,1.1987]$. CCND1 has 68 effective samples, and VMR is $111.3$, outside of the $95\%$ confidence interval $[0.6905,1.3660]$. ECT2 has 56 effective samples, and VMR is $8.2$, outside of the $95\%$ confidence interval $[0.6618,1.4069]$. PFKP has 134 effective samples, and VMR is $82.1$, outside of the $95\%$ confidence interval $[0.7742,1.2543]$. Thus we can apply Proposition~\ref{prop2} to infer that BIM, CCND1, ECT2, PFKP might have positive autoregulation. Nevertheless, it is more likely that the expressions of these four genes are non-autonomous, and there is no autoregulation. There is experimental evidence that ECT2 has autoregulation \cite{hara2006cytokinesis}, which partially validates our inference results. No other gene fits the requirement of Proposition~\ref{np}.

Moignard et al. \cite{moignard2015decoding} measured the expression (mRNA) levels of 46 genes for mouse embryo cells. Chan et al. \cite{chan2017gene} inferred the GRN structure (autoregulation not included). Gene EIF2B1 has 3934 effective samples, and VMR is $0.66$, outside of the $95\%$ confidence interval $[0.9563,1.0447]$. Gene EIF2B1 has 12 effective samples, and VMR is $0.24$, outside of the $95\%$ confidence interval $[0.3469,1.9927]$. We can apply Proposition~\ref{np} to infer that EIF2B1 and HOXD8 might have autoregulation. No other gene fits the requirement of Proposition~\ref{prop2}.

Sachs et al. \cite{sachs2005causal} measured the expression (protein) levels of 11 genes in the RAF signaling pathway for human T cells. The measurements were repeated for 14 groups of cells under different interventions. Werhli et al. \cite{werhli2006comparative} inferred the GRN structure (autoregulation not included). In the inferred GRN (Fig.~\ref{grn}), PIP3 gene has no ancestor, and its VMRs in all 14 groups are larger than $5$, while the $95\%$ confidence intervals for all 14 groups are contained in $[0.8,1.2]$. Therefore, we can apply Proposition~\ref{prop2} and infer that PIP3 might have positive autoregulation. Nevertheless, it is more likely that the expression of PIP3 is non-autonomous, and there is no autoregulation. No other gene fits the requirement of Proposition~\ref{np}.

\section{Heterogeneity and VMR}
\label{app2}
\begin{proposition}
	Consider $n$ independent random variables $X_1,\ldots,X_n$ and probabilities $p_1,\ldots,p_n$ with $\sum p_i=1$. Consider an independent random variable $R$ that equals $i$ with probability $p_i$. Construct a random variable $Z$ that equals $X_i$ when $R=i$. If each $X_i$ has $\text{VMR}\ge 1$, then $Z$ has $\text{VMR}\ge 1$.
\end{proposition}
\begin{proof}
	We only need to prove this for $n=2$. The case for general $n$ can be proved by mixing two variables iteratively. 
	
	Consider random variables $X,Y$ and construct $Z$ that equals $X$ or $Y$ with probability $p$ or $1-p$. Since $\text{VMR}(X)\ge 1$, $\text{VMR}(Y)\ge 1$, we have $\mathbb{E}(X^2)-[\mathbb{E}(X)]^2\ge \mathbb{E}(X)$ and $\mathbb{E}(Y^2)-[\mathbb{E}(Y)]^2\ge \mathbb{E}(Y)$. Then 
	
	\begin{equation*}
		\begin{split}
			&\text{VMR}(Z)\\
			=&\frac{p\mathbb{E}(X^2)+(1-p)\mathbb{E}(Y^2)}{p\mathbb{E}(X)+(1-p)\mathbb{E}(Y)}\\
			&+\frac{-p^2[\mathbb{E}(X)]^2-2p(1-p)\mathbb{E}(X)\mathbb{E}(Y)-(1-p)^2[\mathbb{E}(Y)]^2}{p\mathbb{E}(X)+(1-p)\mathbb{E}(Y)}\\
			=&\frac{p\mathbb{E}(X^2)-p[\mathbb{E}(X)]^2+(1-p)\mathbb{E}(Y^2)-(1-p)[\mathbb{E}(Y)]^2}{p\mathbb{E}(X)+(1-p)\mathbb{E}(Y)}\\
			&+\frac{p(1-p)[\mathbb{E}(X)]^2-2p(1-p)\mathbb{E}(X)\mathbb{E}(Y)+p(1-p)[\mathbb{E}(Y)]^2}{p\mathbb{E}(X)+(1-p)\mathbb{E}(Y)}\\
			\ge & \frac{p\mathbb{E}(X)+(1-p)\mathbb{E}(Y)}{p\mathbb{E}(X)+(1-p)\mathbb{E}(Y)}+\frac{p(1-p)[\mathbb{E}(X)-\mathbb{E}(Y)]^2}{p\mathbb{E}(X)+(1-p)\mathbb{E}(Y)}\\
			\ge & 1.
		\end{split}
	\end{equation*}

\end{proof}

\section*{Acknowledgments}
This research was partially supported by NIH grant R01HL146552 (Y.W.). Y.W. would like to thank Jiawei Yan for fruitful discussions, and Xiangting Li, Zikun Wang, Mingtao Xia for helpful comments. The authors would like to thank some anonymous reviewers for their wise suggestions.

\section*{Declaration of interests}
The Authors declare that there is no conflict of interest.

\bibliographystyle{vancouver}
\bibliography{AC}

\end{document}